\newcommand{\eps}{\varepsilon}
\renewcommand{\epsilon}{\varepsilon}
\begin{document}
\title{The state complexity of star-complement-star}
\titlerunning{The state complexity of star-complement-star}
\author{
Galina Jir\'{a}skov\'{a}\,\inst{1,}%
\thanks{Research supported by  VEGA grant  2/0183/11.}
  \and 
   Jeffrey Shallit\,\inst{2}%
}
\institute{
Mathematical Institute, Slovak Academy of Sciences\\
Gre{\v s}{\' a}kova 6, 040 01 Ko\v{s}ice, Slovakia\\
\email{jiraskov@saske.sk}
  \and 
School of Computer Science, University of Waterloo \\
Waterloo, ON  N2L 3G1 Canada \\ 
\email{shallit@cs.uwaterloo.ca}
}

\maketitle

\begin{abstract}
\label{***abstract}
We resolve an open question by determining matching (asymptotic) 
upper and lower bounds on the state complexity of the
operation that sends a language $L$ to 
$\left( \, \overline{L^*} \, \right)^*$.
\end{abstract}
  
\section{Introduction}
\label{***intro}

Let $\Sigma$ be a finite nonempty alphabet, let $L \subseteq \Sigma^*$ be
a language, let $\overline{L} = \Sigma^* - L$ denote the complement of
$L$, and let $L^*$ (resp., $L^+$) denote the Kleene closure
(resp., positive closure) of the language $L$.
If $L$ is a regular language, its {\it state complexity\/}
is defined to be the number of states in the minimal deterministic
finite automaton accepting $L$ \cite{yzs94}.
In this paper we resolve an open question by determining matching
(asymptotic) upper and
lower bounds on the deterministic state complexity of the operations
\begin{eqnarray*}
L  &\rightarrow& \left( \, \overline{L^*} \, \right)^* \\
L  &\rightarrow& \left( \, \overline{L^+} \, \right)^+ .
\end{eqnarray*}

To simplify the exposition, we will write everything using an exponent notation,
using $c$ to represent complement,
as follows:
\begin{eqnarray*}
L^{+c} &:=& \overline{L^+} \\
L^{+c+} &:=& (\overline{L^+})^+,
\end{eqnarray*}
and similarly for $L^{*c}$ and $L^{*c*}$.

Note that
\begin{displaymath}
L^{*c*} = \begin{cases}
		L^{+c+},& \text{if $\epsilon \not\in L$}; \\
		L^{+c+} \ \cup \ \lbrace \epsilon \rbrace,  &
			\text{if $\epsilon \in L$}.
	\end{cases}
\end{displaymath}

It follows that the state complexity of $L^{+c+}$ and
$L^{*c*}$ differ by at most $1$.  In what follows, we will work
only with $L^{+c+}$.

\section{Upper Bound}
\label{***upper}

Consider a deterministic finite automaton
(DFA) $D=(Q_n,\Sigma,\delta,0,F)$
accepting a language $L$, where $Q_n := \{ 0,1,\ldots, n-1 \}$.
As an example, consider the three-state DFA over $\{a,b,c,d\}$
shown in Fig.~\ref{fig:d_n1} (left).
To get a nondeterministic finite automaton
(NFA) $N_1$ for the language $L^+$ from the DFA $D$,
we add an $\eps$-transition 
from every non-initial final state to the state $0$.
In our example, we add an $\eps$-transition from state $1$ to state $0$;
see Fig.~\ref{fig:d_n1} (right).
After applying the subset construction to the NFA $N_1$
we get a DFA $D_1$ for the language $L^+$.
The state set of $D_1$ consists 
of subsets of $Q_n$
see Fig.~\ref{fig:d1d2} (left). Here the sets in the labels of states
are written without commas and brackets;
thus, for example $012$ stands for the set $\{0,1,2\}$.
Next, we interchange the roles of the
final and non-final states of the DFA $D_1$,
and get a DFA $D_2$ for the language $L^{+c}$; see Fig.~\ref{fig:d1d2} (right).

To get an NFA $N_3$ for $L^{+c+}$ from the DFA $D_2$,
we add an $\eps$-transition from each non-initial final state of $D_2$
to the state $\{0\}$, see Fig.~\ref{fig:n2d3d3min} (top).
Applying the subset construction to the NFA $N_3$
results in a DFA $D_3$ for the language $L^{+c+}$
with its state set
consisting of some sets of subsets of $Q_n$;
see Fig.~\ref{fig:n2d3d3min} (middle). Here, for example, the label $0,2$
corresponds to the set $\{\{0\},\{2\}\}$.
This gives an  upper bound of $2^{2^n}$
on the state complexity of the operation plus-complement-plus.

Our first result shows that 
in the minimal DFA for $L^{+c+}$
we do not have
any state $\{S_1,S_2,\ldots,S_k\}$,
in which a set $S_i$ is a subset of some other set $S_j$;
see Fig.~\ref{fig:n2d3d3min} (bottom).
This reduces the upper bound 
to the  number of antichains of subsets
of an $n$-element set known as the Dedekind number $M(n)$ with
\cite{KM75}
$$
   {n \choose \lfloor n/2\rfloor }\le \log M(n) 
   \le {n \choose \lfloor n/2\rfloor}\Big(1+O(\frac{\log n}{n}) \Big).
$$

\vskip-10pt
\begin{figure}\label{-----fig1}
\centering
\includegraphics[scale=0.35]{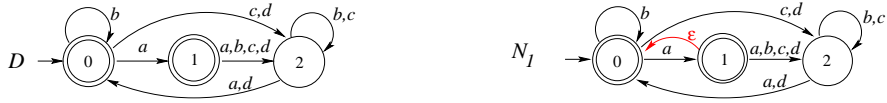}
\caption{DFA $D$ for a language $L$ and NFA $N_1$ for the language $L^+$.}
\label{fig:d_n1}
\end{figure}

\vskip-20pt
\begin{figure}[h!]\label{-----fig2}
\centering
\includegraphics[scale=0.35]{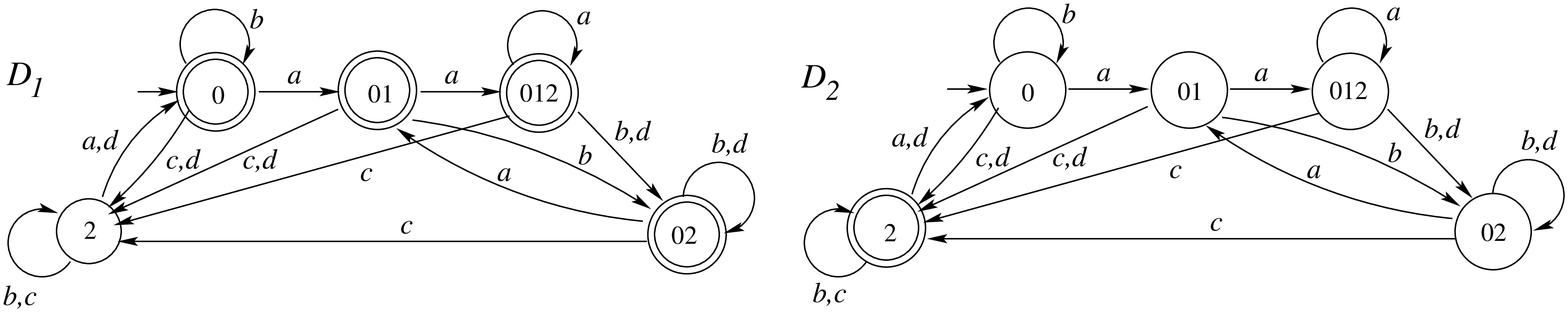}
\caption{DFA $D_1$ for  language $L^+$ and DFA $D_2$ for the language $L^{+c}$.}
\label{fig:d1d2}
\end{figure}

\begin{figure}[h!]\label{-----fig3}
\centering
\includegraphics[scale=0.40]{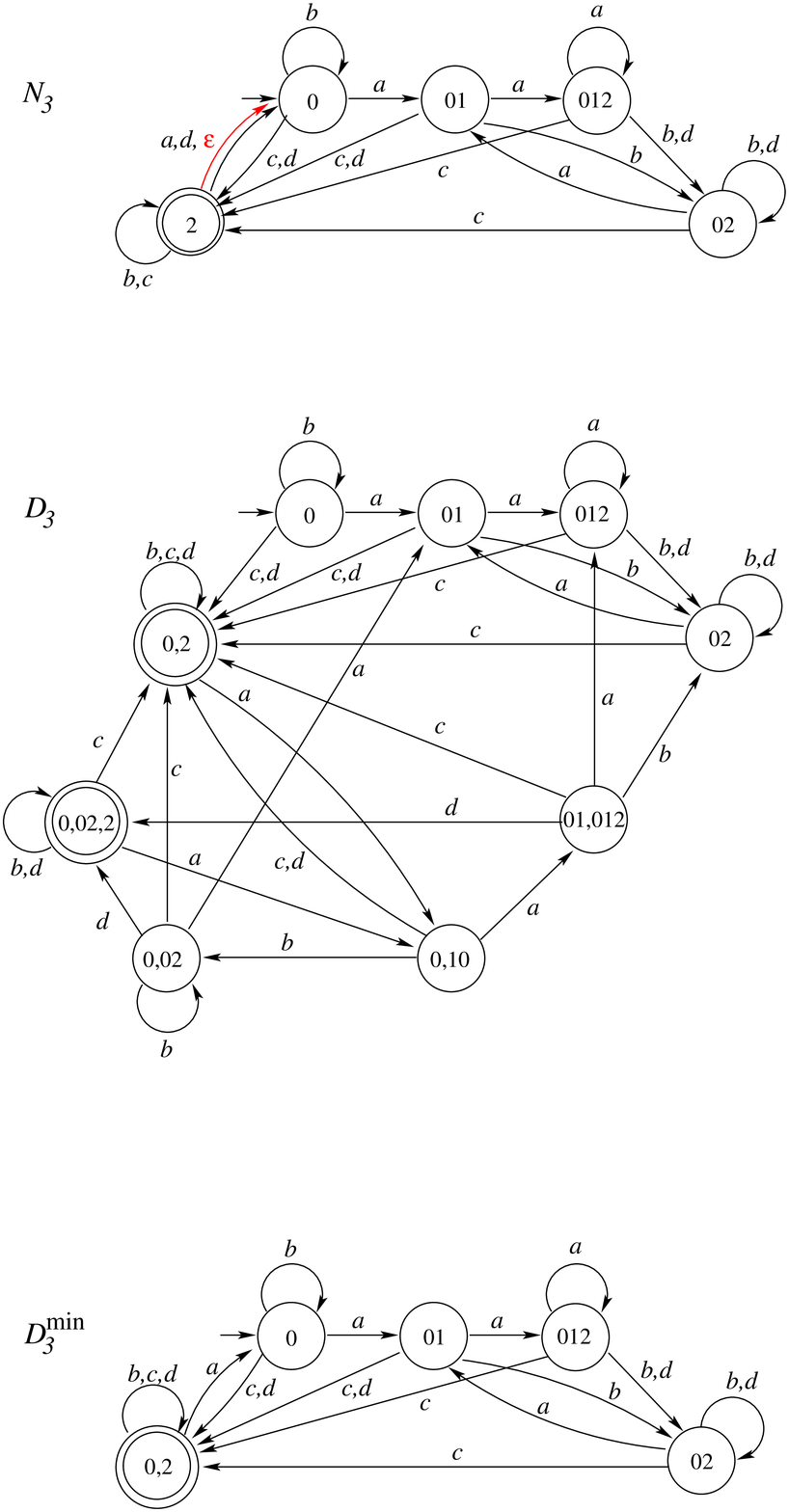}
\caption{NFA $N_3$, DFA $D_3$, and the minimal DFA $D_3^{\rm min}$
          for the language $L^{+c+}$.}
\label{fig:n2d3d3min}
\end{figure}

\begin{lemma}\label{-----le1}
 If $S$ and $T$ are subsets of $Q_n$
 such that $S\subseteq T$,
 then the states $\{S,T\}$ and $\{S\}$ of
 the DFA $D_3$ for the language $L^{+c+}$
 are equivalent.
\end{lemma}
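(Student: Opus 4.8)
\emph{Proof proposal.} The plan is to exhibit a binary relation on the state set of $D_3$ — recall that a state of $D_3$ is a set of subsets of $Q_n$ — that contains the pair $(\{S\},\{S,T\})$, that is respected by the acceptance condition, and that is preserved under every input letter; since $D_3$ is a DFA, related states then necessarily recognize the same language, which is precisely what ``equivalent'' means. A natural candidate is: $\mathcal{A}\sqsubseteq\mathcal{B}$ iff $\mathcal{A}\subseteq\mathcal{B}$ and every member of $\mathcal{B}$ has a subset of $Q_n$ as an element of $\mathcal{A}$. When $S\subseteq T$ we have $\{S\}\sqsubseteq\{S,T\}$ immediately ($S\subseteq S$ and $S\subseteq T$). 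So it remains to prove the two closure properties for $\sqsubseteq$ and then conclude, by induction on $|w|$, that $\delta_3(\{S\},w)$ is final in $D_3$ iff $\delta_3(\{S,T\},w)$ is.

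The engine of the argument is \emph{monotonicity} of the subset-construction transition function. Let $\delta_2$ denote the transition function of $D_1$ (equivalently, of $D_2$). Since the only $\varepsilon$-edges of $N_1$ enter state $0$, a direct reading of the subset construction gives $A\subseteq B \Rightarrow \delta_2(A,a)\subseteq\delta_2(B,a)$ for every letter $a$. Moreover, a subset $X\subseteq Q_n$ is a final state of $D_2$ precisely when $X\cap F=\emptyset$, so the final states of $D_2$ are downward closed under inclusion. These two facts yield the closure properties of $\sqsubseteq$. For acceptance: $\mathcal{A}$ is final in $D_3$ iff some member is final in $D_2$; if $\mathcal{A}\sqsubseteq\mathcal{B}$, then a final member of $\mathcal{A}$ already lies in $\mathcal{B}$, while a final member $B$ of $\mathcal{B}$ has a subset $A\in\mathcal{A}$ with $A\cap F\subseteq B\cap F=\emptyset$, so $A$ is final too. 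For preservation under a letter $a$: $\delta_3(\mathcal{B},a)$ is the $\varepsilon$-closure of $\{\delta_2(B,a):B\in\mathcal{B}\}$; for any $B\in\mathcal{B}$, choosing $A\in\mathcal{A}$ with $A\subseteq B$ gives $\delta_2(A,a)\subseteq\delta_2(B,a)$ with $\delta_2(A,a)\in\delta_3(\mathcal{A},a)$, and $\mathcal{A}\subseteq\mathcal{B}$ handles the inclusion $\delta_3(\mathcal{A},a)\subseteq\delta_3(\mathcal{B},a)$, so $\delta_3(\mathcal{A},a)\sqsubseteq\delta_3(\mathcal{B},a)$.

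The step I expect to be the actual obstacle is the bookkeeping around the $\varepsilon$-closure defining $N_3$, whose edges run from the \emph{non-initial} final states of $D_2$ into $\{0\}$. When checking the closure property for an element $\{0\}$ of $\delta_3(\mathcal{B},a)$ that was produced by this $\varepsilon$-closure, one would like to carry the triggering set $\delta_2(B,a)$ over to the $\mathcal{A}$-side; downward propagation of finality does force the corresponding $\delta_2(A,a)$ to be a final state of $D_2$, but it might equal the initial state $\{0\}$ and hence fail to be \emph{non-initial}. This is resolved by a short case split: either $\delta_2(A,a)=\{0\}$, so $\{0\}\in\delta_3(\mathcal{A},a)$ outright; or $\delta_2(A,a)\neq\{0\}$, so it is a non-initial final state of $D_2$ and the same $\varepsilon$-edge forces $\{0\}$ into $\delta_3(\mathcal{A},a)$. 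Either way $\{0\}$ is accounted for, the induction on $|w|$ closes, and applying it to $\{S\}\sqsubseteq\{S,T\}$ yields the lemma. (A cosmetic variant would instead track the symmetric relation ``$\mathcal{A}$ and $\mathcal{B}$ have the same $\subseteq$-minimal elements'', which is also preserved and relates $\{S\}$ with $\{S,T\}$; I would use whichever makes the write-up shorter.)
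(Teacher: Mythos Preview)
Your argument is correct and rests on the same two facts the paper uses: the subset-construction transition $\delta_2$ is monotone in the $\subseteq$ order, and the accepting states of $D_2$ (the sets disjoint from $F$) are downward closed. The paper packages this slightly differently --- it traces one accepting run of $N_3$ from $T$, splits it at the first $\varepsilon$-step, and uses downward closure once to produce a matching run from $S$ --- whereas you encode the same monotonicity as a simulation preorder $\sqsubseteq$ on $D_3$-states and close it under single letters; your version is a bit more general (it immediately handles arbitrary antichain reductions, not just the pair $\{S\}$ vs.\ $\{S,T\}$), but the underlying idea is the same.
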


\begin{proof}
 Let $S$ and $T$ be subsets of $Q_n$
 such that $S\subseteq T$.
 We only need to show that
 if a string $w$ is accepted by the NFA $N_3$ starting from the state $T$,
 then it also is accepted by $N_3$ from the state $S$.

 Assume  $w$ is accepted  by $N_3$ from $T$.
 Then in the NFA $N_3$, an accepting computation on $w$ from state $T$
 looks like this:
 $$
     T \stackrel{u }{\rightarrow} T_1 \stackrel{\eps }{\rightarrow}\{0\}
       \stackrel{v }{\rightarrow} T_2,
 $$
 where $w=uv$, and
 state $T$ goes to an accepting state $T_1$ on $u$ 
 without using any $\eps$-transitions,
 then $T_1$ goes to $\{0\}$ on $\eps$,
 and then $\{0\}$ goes to an accepting state $T_2$
 on $v$; it also may happen that $w=u$, in which case
 the computation ends in $T_1$.
 Let us show that $S$ goes to an accepting state of the NFA $N_3$ on $u$.

 Since $T$ goes to an accepting state $T_1$ on $u$ in the NFA $N_3$
 without using any $\eps$-transition,
 state $T$ goes to the accepting state $T_1$ in the DFA $D_2$,
 and therefore to the rejecting state $T_1$ of the DFA $D_1$.
 Thus, every state $q$ in $T$ goes to  rejecting states in the NFA $N_1$.
 Since $S\subseteq T$, every state in $S$ goes to
 rejecting states in the NFA $N_1$,
 and therefore $S$ goes to a rejecting state $S_1$ in the DFA $D_1$,
 thus to the accepting state $S_1$ in the DFA  $D_2$.
 Hence $w=uv$ is accepted from $S$ in the NFA $N_3$
 by computation
 $$
     S \stackrel{u }{\rightarrow} S_1 \stackrel{\eps }{\rightarrow}\{0\}
       \stackrel{v }{\rightarrow} T_2.
 $$
\qed
\end{proof}

Hence whenever a state $\mathcal{S}=\big\{ S_1,S_2,\ldots,S_k\}$ of the DFA $D_3$
contains two subsets $S_i$ and $S_j$ with $i\neq j$ and $S_i\subseteq S_j$,
then it is equivalet to state $\mathcal{S}\setminus \{S_j\}$.
Using this property, we get the following result.

\begin{lemma}\label{-----le2}\label{le:express}
 Let $D$ be a DFA for a language $L$ with state set $Q_n$,
 and $D_3^{\rm min}$ be the minimal DFA for $L^{+c+}$ as described above.
 Then every  state of $D_3^{\rm min}$
 can be expressed in the form
\begin{align}
 \mathcal{S}=\{X_1,X_2,\ldots,X_k\}
\label{eq1}
\end{align}
where
 \begin{itemize}
 \item  $1\le k\le n$;
\item there exist subsets 
$S_1\subseteq S_2\subseteq \cdots  \subseteq S_k\subseteq Q_n$; and
\item there exist $q_1,\ldots,q_k$, pairwise distinct states of $D$
            not in $S_k$; such that
\item $X_i=\{q_i\}\cup S_i$  for $i=1,2,\ldots,k$.
\end{itemize}
\end{lemma}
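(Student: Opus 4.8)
The plan is to establish the stronger assertion that \emph{every reachable state $\mathcal{S}$ of the DFA $D_3$, after deleting every member that properly contains another member, is a set of the form~(\ref{eq1})}. By Lemma~\ref{-----le1} and the observation following it, a state $\mathcal{S}$ of $D_3$ is equivalent to the state obtained from it by such deletions, and every state of $D_3^{\rm min}$ is the image of a reachable state of $D_3$, so the stronger assertion suffices. Write $\mathcal{S}^{-}$ for the set of $\subseteq$-minimal members of $\mathcal{S}$.

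The first step is a reduction-commutes-with-transition fact: $(\delta_3(\mathcal{S},a))^{-} = (\delta_3(\mathcal{S}^{-},a))^{-}$ for every input symbol $a$. Indeed, the set-transition $\delta_1(\cdot,a)$ on subsets of $Q_n$ is monotone under $\subseteq$, so any member of $\delta_3(\mathcal{S},a)$ produced from a non-minimal member of $\mathcal{S}$ is a superset of a member produced from some minimal one; and the extra singleton $\{0\}$ that the restart $\eps$-transitions of $N_3$ throw in is present in $\delta_3(\mathcal{S},a)$ exactly when it is present in $\delta_3(\mathcal{S}^{-},a)$, since ``$\delta_1(S,a)\cap F=\emptyset$'' is a downward closed condition on $S$. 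Hence it suffices to prove that if $\mathcal{S}$ is of the form~(\ref{eq1}) then $(\delta_3(\mathcal{S},a))^{-}$ is again of the form~(\ref{eq1}), and then to induct on the length of a shortest word reaching $\mathcal{S}$; the base case is the initial state $\{\{0\}\}$, which has the form~(\ref{eq1}) with $k=1$, $S_1=\emptyset$, $q_1=0$.

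For the inductive step, write $\mathcal{S}=\{X_1,\dots,X_k\}$ with $X_i=\{q_i\}\cup S_i$ and $S_1\subseteq\dots\subseteq S_k$ as in~(\ref{eq1}), and put $p_i:=\delta(q_i,a)$ and $T_i:=\{\delta(s,a):s\in S_i\}$. The members of $\delta_3(\mathcal{S},a)$ are the sets $\{p_i\}\cup T_i$, each enlarged by the state $0$ when $(\{p_i\}\cup T_i)\cap(F\setminus\{0\})\neq\emptyset$ (from the $\eps$-closure in $N_1$), together with one further copy of $\{0\}$ when some $\delta_1(X_i,a)$ misses $F$. Monotonicity of $\delta(\cdot,a)$ on sets keeps the ``chain part'' a chain, $T_1\subseteq\dots\subseteq T_k$. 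One then passes to the $\subseteq$-minimal members of this family and re-orders them into a new chain $S'_1\subseteq\dots\subseteq S'_{k'}$ with pairwise distinct appended states $q'_i\notin S'_{k'}$; once that is done, $k'\le n$ is automatic, as the $q'_i$ are distinct states of $Q_n$.

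I expect this last re-indexing to be the crux. One must track every way the state $0$ can appear --- injected into various $T_i$ by the $\eps$-closure of $N_1$, and/or entering as a fresh singleton $\{0\}$ through a restart $\eps$-transition of $N_3$ --- and check that passing to minimal members always disposes of the degenerate cases: when two appended states $p_i$ coincide, or when some $p_i$ has fallen into the chain, the corresponding set is a superset of another surviving set and vanishes; and the fresh $\{0\}$ is either already contained in a surviving set, or is incomparable to all of them and legitimately becomes the bottom $S'_0=\emptyset$, $q'_0=0$ of the chain. The genuinely delicate point is ruling out a surviving pair $A,B$ with $|A\setminus B|\ge 2$ and $|B\setminus A|\ge 2$ (which would be incompatible with the form~(\ref{eq1})); for this I would use the explicit description of a reachable state of $D_3$ --- each member equals $\delta_1^{*}(\{0\},z)$ for a suffix $z$ of the input word, with these suffixes nested --- to show that no such pair arises.
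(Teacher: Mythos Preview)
Your overall framework---induction on word length, a one-step transition analysis, then passing to $\subseteq$-minimal members via Lemma~\ref{-----le1}---is exactly the paper's approach, and your reduction-commutes-with-transition observation is correct.

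The gap is in your diagnosis of the ``genuinely delicate point.'' Once the post-transition family has been reduced to sets of the form $\{p_i\}\cup T_i'$ with the $T_i'$ forming a chain (which is what the induction hypothesis plus monotonicity gives, after discarding the $0$-containing members whenever a fresh $\{0\}$ appears), a pair $A,B$ with $|A\setminus B|\ge2$ and $|B\setminus A|\ge2$ \emph{cannot occur}: if $T_i'\subseteq T_j'$ then $(\{p_i\}\cup T_i')\setminus(\{p_j\}\cup T_j')\subseteq\{p_i\}$. So there is nothing to rule out, and the suffix-structure idea you sketch is neither needed nor clearly workable (nesting of suffixes of $w$ does not translate into any containment between their $\delta_1$-images of $\{0\}$). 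Incidentally, your sentence about the fresh $\{0\}$ being ``already contained in a surviving set'' has the containment backwards: $\{0\}$ is the subset, so it survives and the $0$-containing members are the ones that vanish.

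What you have \emph{not} handled is this: after taking minimal members you may be left with a top set $B_m=\{p_m\}\cup T_m$ in which $p_m\in T_m$, so $B_m=T_m$. Your claim ``when some $p_i$ has fallen into the chain, the corresponding set is a superset of another surviving set and vanishes'' is false here---$B_m$ need not contain any other surviving $B_j$ (take $T_1=\emptyset$, $T_2=\{5\}$, $p_1=3$, $p_2=5$: both $\{3\}$ and $\{5\}$ are minimal, yet $p_2\in T_2$). The actual fix, which the paper carries out explicitly, is to \emph{re-choose} the distinguished element of $B_m$: minimality of $B_{m-1}$ against $B_m$ forces $T_{m-1}\subsetneq T_m$, so one picks $t\in T_m\setminus T_{m-1}$ and rewrites $B_m=\{t\}\cup(T_m\setminus\{t\})$; then $t$ differs from $p_1,\dots,p_{m-1}$ (those lie outside $T_m$) and $T_{m-1}\subseteq T_m\setminus\{t\}$, restoring form~(\ref{eq1}). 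This re-choice of the top distinguished element, not the symmetric-difference condition, is the crux of the inductive step.
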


\begin{proof}
 Let $D=(Q_n,\Sigma,\delta,0,F)$. 

 For a state $q$ in $Q_n$ and a symbol $a$ in $\Sigma$, 
 let $q.a$ denote the state in $Q_n$, to~which $q$ goes on $a$,
 that is, $q.a=\delta(q,a)$.
 For a subset $X$ of $Q_n$ let $X.a$ denote
 the set of states to which states in $X$ go by $a$,
 that is,
 $$
   X.a=\bigcup_{q\in X}\{\delta(q,a)\}.
 $$
 Consider transitions on a symbol $a$
 in automata $D,N_1,D_1,D_2,N_3$; 
 Fig.~\ref{fig:transitions} illustrates these transitions.
 In the NFA $N_1$,
 each state $q$ goes to a state in $\{0,q.a\}$ if $q.a$ is a final state of $D$,
 and to state  $q.a$ if $q.a$ is non-final.
 It follows that in the DFA $D_1$ for $L^+$,
 each state $X$ (a subset of $Q_n$)
 goes on $a$ to  final state $\{0\}\cup X.a$ if $X.a$ contains a final state of $D$,
 and to non-final state $X.a$ if all states in $X.a$ are non-final in $D$.
 Hence in the DFA $D_2$ for $L^{+c}$,
 each state $X$ goes on $a$ 
 to non-final state $\{0\}\cup X.a$ if $X.a$ contains a final state of $D$,
 and to the final state $X.a$ if all states in $X.a$ are non-final in $D$.

 Therefore, in the NFA $N_3$ for $L^{+c+}$,
 each state $X$ goes on $a$ to a state in
 $\{ \{0\}, X.a\}$  if all states in $X.a$ are non-final in $D$,
 and to state  $\{0\}\cup X.a$ if $X.a$ contains a final state of $D$.

 To prove the lemma for each state,
 we use induction on the length of the shortest
 path from the initial state to the state of $D_3^{\rm min}$ in question.
 The base case is a path of length $0$.  In this case,
 the initial state is $\{\{0\}\}$, 
 which is in the required form (\ref{eq1}) with
 $k=1,q_1=0,$ and $S_1=\emptyset$.

 \begin{figure}[h]\label{-----fig4}
 \centering
 \includegraphics[scale=0.35]{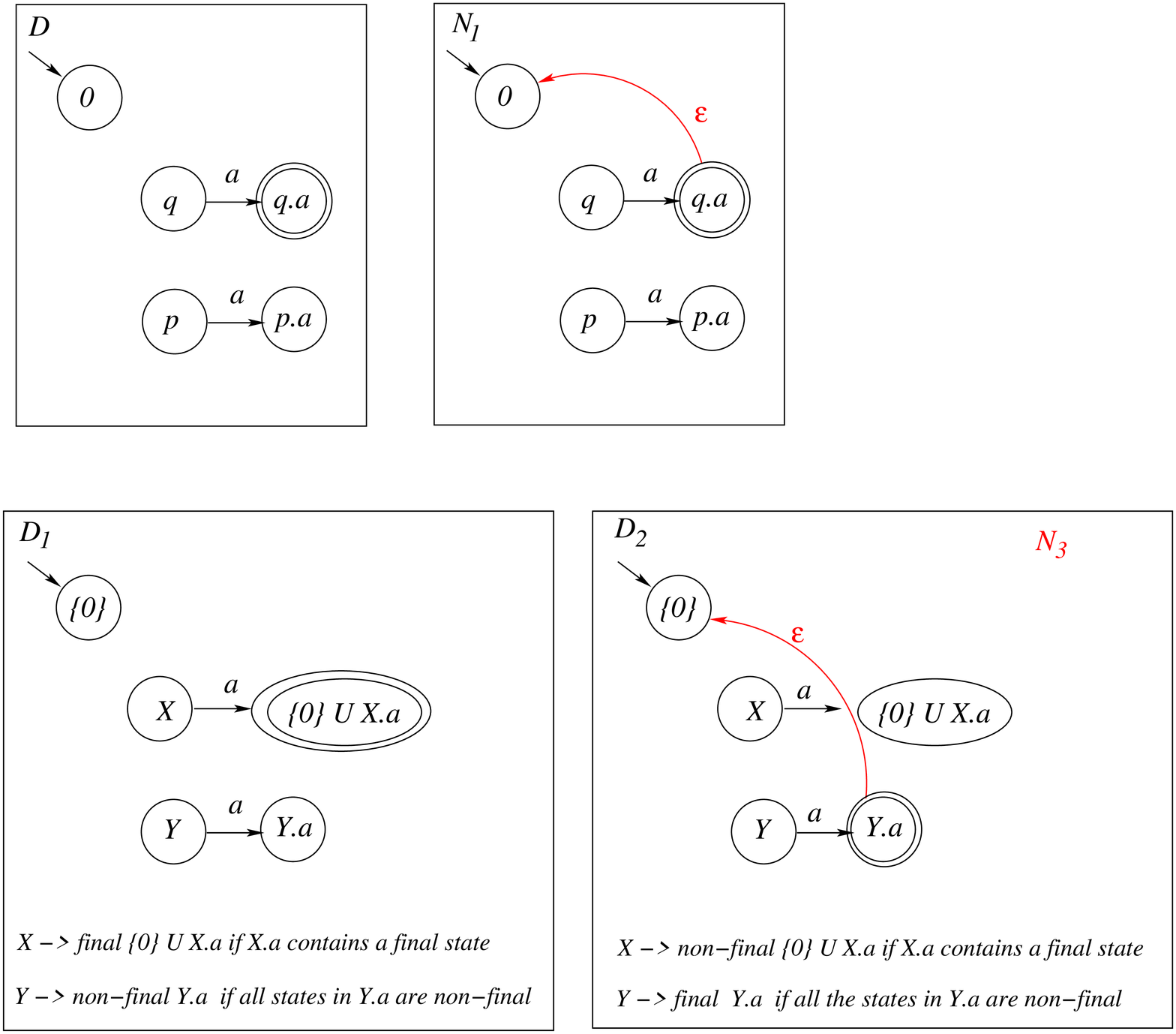}
 \caption{Transitions under symbol $a$ in automata $D, N_1,D_1,D_2,N_3$.}
 \label{fig:transitions}
 \end{figure}

 For the induction step, let
 $$
  \mathcal{S}=\{X_1,X_2,\ldots,X_k\},
 $$ 
 where $1\le k\le n$, and

 $\bullet$  $S_1\subseteq S_2\subseteq \cdots  \subseteq S_k\subseteq Q_n$,

 $\bullet$  $q_1,\ldots,q_k$ are pairwise distinct states of $D$
            that are not in $S_k$ and

 $\bullet$ $X_i=\{q_i\}\cup S_i$  for $i=1,2,\ldots,k$.

\bigskip
 We now prove the result for all states reachable from $\mathcal{S}$
 on a symbol $a$.

 First, consider the case that each $X_i$ goes on $a$
 to a non-final state $X_i'$ in the NFA~$N_3$.
 It follows that $\mathcal{S}$ goes on $a$
 to $\mathcal{S'}=\{X_1',X_2',\ldots,X_k'\}$,
 where
 $$
    X_i'= \{q_i.a\} \cup S_i.a \cup \{0\}.
 $$
 Write $p_i=q_i.a$ and $P_i=S_i.a\cup\{0\}$.
 Then we have $P_1\subseteq P_2\subseteq \cdots  \subseteq P_k\subseteq Q_n$.

 If $p_i=p_j$ for some $i,j$ with $i<j$,
 then $X_i'\subseteq X_j'$,
 and therefore $X_j'$ can be removed from state $\mathcal{S'}$ 
 in the minimal DFA $D_3^{\rm min}$.
 After  several such removals, we arrive at an equivalent state
 $$
   \mathcal{S''} = \{ X_1'',X_2'',\ldots,X_\ell''\}
 $$
 where $\ell\le k$,
 $X_i''=\{r_i\}\cup R_i$
 and the states $r_1,r_2,\ldots,r_\ell$
 are pairwise distinct.

 If $r_i\in R_\ell$ for some $i$ with $i<\ell$,
 then $X_i\subseteq R_\ell$; thus $R_\ell$ can be removed.
 After all such removals, we get an equivalent set
 $$
    \mathcal{S'''} = \{ X_1''',X_2''',\ldots,X_m'''\}
 $$ 
 where $m\le \ell$,
 $X_i'''=\{t_i\}\cup T_i$
 and the states $t_1,t_2,\ldots,t_m$
 are pairwise distinct and $t_1,t_2,\ldots,t_{m-1}$ are not in $T_m$.
 If $t_m\notin T_m$, then the state $\mathcal{S'''}$  
 is in the required form (\ref{eq1}).
 Otherwise, if $T_{m-1}$ is a proper subset of $T_m$,
 then there is a state $t$ in $T_m-T_{m-1}$,
 and then we can take $X_m'''=\{t\}\cup T_m-\{t\}$:
 since $t_1,\ldots,t_{m-1} $ are not in $T_m$,
 they are distinct from $t$,
 and moreover $T_{m-1}\subseteq T_m-\{t\}$.

 If $T_{m-1}=T_m$, then $X_{m-1}'''\supseteq X_m'''$,
 and therefore $X_{m-1}'''$ can be removed from $\mathcal{S'''}$.
 After all these removals we
 either reach some $T_i$ that is a proper subset of $T_m$,
 and then pick a state $t$ in $T_m-T_i$ in the same way as above,
 or we only get a single set $T_m$,
 which is in the required form $\{r_m\}\cup T_m-\{r_m\}$.

 This proves that if  each $X_i$ in $\mathcal{S}$ goes on $a$
 to a non-final state $X_i'$ in the NFA~$N_3$,
 then $\mathcal{S}$ goes on $a$ in the DFA $D_3^{\rm min}$ 
 to a set that is in the required form (\ref{eq1}).

 \bigskip
 Now consider the case that 
 at least one  $X_j$ in $\mathcal{S}$
 goes to a final state $X_j'$ in the NFA $N_3.$
 It follows that 
 $\mathcal{S}$ goes to a final state
 $$
   \mathcal{S'}=\{ \{0\}, X_1',X_2',\ldots,X_k'\},
 $$
 where $X_j'=\{q_j.a\}\cup S_j.a$ and if $i\neq j$,
 then $X_i'=\{q_i.a\}\cup S_i.a$ or $X_i'=\{0\}\cup\{q_i.a\}\cup S_i.a$
 We now can remove all $X_i$ that contain state $0$,
 and arrive at an equivalent state 
 $$
   \mathcal{S''}=\{ \{0\}, X_1'',X_2'',\ldots,X_\ell''\},
 $$
 where $\ell\le k$, and
 $X_i''= \{p_i\}\cup P_i$, and
 $P_1\subseteq P_2 \subseteq \cdots \subseteq P_\ell \subseteq Q_n $,
 and each $p_i$ is distinct from $0$.

 Now in the same way as above we arrive at an equivalent state 
 $$
  \{ \{0\}, \{t_1\}\cup T_1, \ldots, \{t_m\}\cup T_m \}
 $$
 where $m\le \ell$,
 all the $t_i$ are pairwise distinct and different from $0$,
 and moreover, the states $t_1,\ldots,t_{m-1}$ are not in $T_m$.
 If $t_m$ is not in $T_m$, then we are done.
 Otherwise, we remove all sets with $T_i=T_m$.
 We either arrive at a proper subset $T_j$ of $T_m$,
 and may pick a state $t$ in $T_m-T_j$ to play the role of new $t_m$,
 or we arrive at $\{\{0\}, T_m\}$,
 which is in the required form $\{\{0\}\cup\emptyset, t_m\cup T_m-\{t_m\}\}$.
 This completes the proof of the lemma.
\qed
\end{proof}

\begin{corollary}[Star-Complement-Star: Upper Bound]\label{-----co1}
 If a language $L$ is accepted by a DFA of $n$ states,
 then the language $L^{*c*}$ is accepted by a DFA of $2^{O(n\log n)}$ states.
\end{corollary}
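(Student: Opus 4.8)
The plan is to bound the number of states of $D_3^{\rm min}$ directly by counting the objects produced by Lemma~\ref{le:express}, and then to pass from $L^{+c+}$ to $L^{*c*}$ using the remark from the introduction that their state complexities differ by at most $1$.

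First I would fix a DFA $D$ with state set $Q_n$ accepting $L$ and form $D_3^{\rm min}$ exactly as in Section~\ref{***upper}. By Lemma~\ref{le:express}, every state of $D_3^{\rm min}$ can be written as $\mathcal{S}=\{X_1,\ldots,X_k\}$ with $1\le k\le n$, with subsets $S_1\subseteq S_2\subseteq\cdots\subseteq S_k\subseteq Q_n$, with pairwise distinct states $q_1,\ldots,q_k\in Q_n\setminus S_k$, and with $X_i=\{q_i\}\cup S_i$. This representation need not be unique, but that is harmless: the number of states of $D_3^{\rm min}$ is at most the number of tuples $\bigl(k,(q_1,\ldots,q_k),(S_1,\ldots,S_k)\bigr)$ of this shape, since over-counting only weakens the upper bound we are after.

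Next I would count these tuples. For a fixed $k$, there are at most $n^{k}$ choices for the ordered tuple of distinct states $q_1,\ldots,q_k$. Once these are chosen, the chain $S_1\subseteq S_2\subseteq\cdots\subseteq S_k$ must lie inside the $(n-k)$-element set $Q_n\setminus\{q_1,\ldots,q_k\}$, and such a chain is completely determined by the function sending each element of that set to the least index $i$ with the element in $S_i$, or to $k+1$ if it belongs to no $S_i$; hence there are at most $(k+1)^{n-k}$ such chains. Summing over $k$, the number of states of $D_3^{\rm min}$ is at most
\begin{displaymath}
 \sum_{k=1}^{n} n^{k}(k+1)^{n-k} \ \le\ n\cdot n^{n}\cdot (n+1)^{n} \ =\ 2^{O(n\log n)},
\end{displaymath}
which bounds the state complexity of $L^{+c+}$. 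Since $L^{*c*}$ is either $L^{+c+}$ or $L^{+c+}\cup\{\epsilon\}$, the state complexity of $L^{*c*}$ is at most one larger, hence also $2^{O(n\log n)}$.

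I do not anticipate a genuine obstacle: all the structural work is already carried by Lemma~\ref{le:express}, and what remains is the elementary estimate above. The only point requiring a moment's care is the encoding of a length-$k$ chain in a $t$-element Boolean lattice by a map into $\{1,\ldots,k+1\}$, which is precisely what yields the $(k+1)^{n-k}$ factor and keeps the total in the range $2^{O(n\log n)}$.
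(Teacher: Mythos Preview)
Your proposal is correct and follows essentially the same approach as the paper: both invoke Lemma~\ref{le:express}, encode the chain $S_1\subseteq\cdots\subseteq S_k$ inside $Q_n\setminus\{q_1,\ldots,q_k\}$ by a map into $\{1,\ldots,k+1\}$ to get the factor $(k+1)^{n-k}$, and then crudely bound the resulting sum by $2^{O(n\log n)}$. The only cosmetic difference is that the paper counts the ordered $k$-tuples of distinct $q_i$ exactly as $\binom{n}{k}k!$ while you use the looser $n^k$; either way the asymptotic conclusion is identical, and your explicit remark about passing from $L^{+c+}$ to $L^{*c*}$ is also what the paper intends.
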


\begin{proof}
 Lemma~\ref{le:express} gives the following upper bound
 $$
    \sum_{k=1}^{n}   {n \choose k} k! (k+1)^{n-k}
 $$
 since 
 we first choose any permutation of $k$ distinct elements $q_1,\ldots,q_k$,
 and then
 represent each set $S_i$ as disjoint union of sets $S_1', S_2',\ldots, S_i'$
 given by a function $f$ from $Q_n-\{q_1,\ldots,q_k\}$ to $\{1,2,\ldots,k+1\}$
 as follows:
 $$
     S_i'=\{q\mid f(q)=i\}, \qquad  
            S_i=S_1'\ \dot\cup \ S_2'\ \dot\cup \ \cdots \ \dot\cup \ S_i',
 $$
 while states with $f(q)=k+1$ will be outside each $S_i'$;
 here $\dot\cup$ denotes a disjoint union.
 Next, we  have
 $$
    \sum_{k=1}^{n}   {n \choose k} k! (k+1)^{n-k}\le
    n! \sum_{k=1}^{n}   {n \choose k}(n+1)^{n-k} \le
   n!(n+2)^n= 2^{O(n\log n)},
 $$
 and the upper bound follows.
\qed
\end{proof}

\begin{remark}
The summation $\sum_{k=1}^n {n \choose k} k! (k+1)^{n-k}$ differs by
one from Sloane's sequence A072597 \cite{Sloane}.  These numbers are the
coefficients of the exponential generating function of
$1/(e^{-x} - x)$.  It follows, by standard techniques, that
these numbers are asymptotically given by
$C_1 W(1)^{-n} n!$,
where
$$W(1) \doteq .5671432904097838729999686622103555497538$$ is the
Lambert W-function evaluated at $1$, equal to the positive real
solution of the equation $e^x = 1/x$, and $C_1$ is a constant,
approximately 
$$1.12511909098678593170279439143182676599.$$
The convergence
is quite fast; this gives a somewhat more explicit version of the
upper bound.
\end{remark}

\section{Lower Bound}
\label{***lower}

We now turn to the matching lower bound on the state complexity
of plus-complement-plus.  The basic idea is to create one DFA
where the DFA for $L^{+c+}$ has many reachable states, and another
where the DFA for $L^{+c+}$ has many distinguishable states.
Then we ``join'' them together in Corollary~\ref{-----co2}.

The following lemma uses a four-letter alphabet 
to prove the reachability of some specific states
of the DFA $D_3$ for plus-complement-plus.

\begin{lemma}\label{-----le3}\label{le:reach}
 There exists an $n$-state
 DFA  $D=(Q_n,\{a,b,c,d\},\delta,0,\{0,1\})$
 such that in the  DFA $D_3$ for the language $L(D)^{+c+}$
 every state of the form
 $$
\Big\{  \{0,q_1\}\cup S_1,  \{0,q_2\}\cup S_2,\ldots, \{0,q_k\}\cup S_k\Big\} 
 $$
 is reachable, 
 where $1\le k \le n-2$, 
 $S_1,S_2,\ldots,S_k$ are subsets of $\{2,3,\ldots,n-2\}$ with
 $S_1\subseteq S_2\subseteq \cdots  \subseteq S_k$, and the
 $q_1,\ldots,q_k$ are pairwise distinct states in $\{2,3,\ldots,n-2\}$
 that are not in $S_k$.
\end{lemma}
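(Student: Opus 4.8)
The plan is to exhibit a concrete DFA $D=(Q_n,\{a,b,c,d\},\delta,0,\{0,1\})$ whose transitions are engineered so that the listed subsets of $D_3$ become reachable, and then to prove reachability by induction on a ``size'' parameter of the target state. The guiding observation, already contained in the $N_3$-transition analysis in the proof of Lemma~\ref{le:express}, is this: on a letter $x$, a member subset $X$ of a state of $D_3$ goes in $N_3$ only to $\{0\}\cup X.x$ when $X.x$ meets the final set $\{0,1\}$ of $D$, but goes to both $X.x$ and the extra subset $\{0\}$ when it does not; in particular, a letter that fixes $0$ never spawns the singleton $\{0\}$, since then $0$ stays in the image of every member containing it. Consequently, if $a$ and one further letter are chosen to fix $0,1,n-1$ and to keep the working states $\{2,\dots,n-2\}$ inside $\{0\}\cup\{2,\dots,n-2\}$, then the states of $D_3$ all of whose members contain $0$ and avoid $1$ and $n-1$ --- exactly the family in the statement, plus a few auxiliary ones --- are preserved by these two letters; moreover, since in the subset construction the number of member subsets can increase only by picking up the spawned singleton $\{0\}$, any move that raises $k$ must temporarily leave this family. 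I would therefore use $a$ as a cyclic permutation of $\{2,\dots,n-2\}$ (for realignment), one further letter to enlarge the chain $S_1\subseteq\cdots\subseteq S_k$ by one working state, and the remaining two letters to effect the increment of $k$, routing through a transient state that contains $\{0\}$ and then growing that $\{0\}$ into the new member subset. The four transition tables are then tuned so that each letter, applied simultaneously by the subset construction to all members of a state, performs exactly its intended operation and nothing more.

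With $D$ fixed, I would argue reachability by induction on $m(\mathcal S)=k+\sum_{i=1}^{k}|S_i|$, where $\mathcal S=\{\{0,q_1\}\cup S_1,\dots,\{0,q_k\}\cup S_k\}$ is the target state. The base case $m(\mathcal S)=1$, namely $k=1$ and $S_1=\emptyset$, asks for $\{\{0,q_1\}\}$ with $q_1\in\{2,\dots,n-2\}$: one reaches $\{\{0,2\}\}$ from the initial state $\{\{0\}\}$ of $D_3$ by a short explicit word (passing through $\{\{0,1\}\}$) and then moves the single marker to $q_1$ by a power of $a$. For the induction step, first suppose that not all of $S_1,\dots,S_k$ are empty, let $j$ be least with $S_j\ne\emptyset$, and pick $p\in S_j$; deleting $p$ from $S_j,\dots,S_k$ leaves a state $\mathcal S'$ of the same form with $m(\mathcal S')<m(\mathcal S)$, and after realigning with a power of $a$, the induction hypothesis together with one application of the enlargement letter yields $\mathcal S$. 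Next suppose all of $S_1,\dots,S_k$ are empty and $k\ge 2$; then deleting the member $\{0,q_1\}$ leaves a state $\mathcal S'$ of the same form with $m(\mathcal S')=m(\mathcal S)-1$, and the induction hypothesis, a power of $a$, and the increment-$k$ routine reinstate the member $\{0,q_1\}$. Here one checks directly from the transition tables that the marker of the freshly created member is automatically distinct from $q_2,\dots,q_k$ and lies outside $S_k$, which is precisely the kind of invariant recorded in form~(\ref{eq1}) of Lemma~\ref{le:express}.

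The step I expect to be the main obstacle is the design of $D$, together with the verification that each letter does exactly what is claimed. Every letter acts on an entire state of $D_3$, that is, on a whole set of subsets of $Q_n$ at once, so the enlargement letter must insert the intended working state into the intended members without disturbing the markers, without violating the nesting $S_1\subseteq\cdots\subseteq S_k$, without knocking any member off the final set $\{0,1\}$ of $D$, and without merging two distinct members; and the increment-$k$ routine, which deliberately leaves the closed family in order to pick up the spawned $\{0\}$, must then grow that $\{0\}$ into the desired new member while leaving behind no stray member subset and re-entering the closed family with every marker condition restored. Once the four transition tables are pinned down, the remaining verifications reduce to the same patient case analysis of $N_3$-transitions already carried out in the proof of Lemma~\ref{le:express}.
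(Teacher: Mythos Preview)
Your inductive decomposition has a genuine gap in the ``enlargement'' step. Consider the case $S_1=\emptyset$, $S_2=\{p\}$ (so $j=2$). After deleting $p$ you land in $\mathcal S'=\{\{0,q_1\},\{0,q_2\},\dots\}$, where the first two members are \emph{both} two-element sets $\{0,q_i\}$ differing only in the marker. Your enlargement letter must now insert $p$ into the second member but not the first. But a single input letter $x$ acts on a member $X$ as $X\mapsto X.x$ (with $0$ re-added if $X.x$ meets $F$), and this is applied uniformly to every member; the only way $p$ can enter $\{0,q_2\}.x$ and not $\{0,q_1\}.x$ is to have $q_2.x=p$ and $q_1.x\ne p$, which destroys the marker $q_2$. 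Since the $q_i$ are arbitrary distinct elements of $\{2,\dots,n-2\}$ and a power of $a$ merely shifts them cyclically (preserving all pairwise differences), no prior realignment helps. So no DFA letter can do what you ask, and the induction on $m(\mathcal S)=k+\sum|S_i|$ stalls.

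The paper's proof avoids this by inducting on $k$ alone and never trying to grow a single $S_i$ in isolation. It fixes a string $w$ over $\{a,b\}$ that carries $\{0\}$ to $\{0\}\cup S_1$ in $D_1$; applying $w$ to \emph{every} member $\{0,q_i\ominus|w|\}\cup(S_i\ominus|w|)$ simultaneously adds $S_1$ to each, which is harmless precisely because of the nesting $S_1\subseteq S_i$. The new $k$-th member is created by the pair $c,d$: $c$ sends $0$ out to $n-1$ so that every member becomes $D_2$-accepting and spawns the singleton $\{0\}$ in $N_3$, and $d$ sends $n-1$ back to $0$ while turning the spawned $\{0\}$ into $\{0,1\}$, after which $b^\ell w$ grows it into $\{0,q_1\}\cup S_1$. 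The point you are missing is that the chain $S_1\subseteq\cdots\subseteq S_k$ should be exploited, not rebuilt element by element.
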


\begin{proof}
 Consider the DFA $D$ over $\{a,b,c,d\}$
 shown in  Fig.~\ref{fig:max_reach}.
 Let $L$ be the language accepted by the DFA $D$.
 
 \begin{figure}[t]\label{-----fig5}
 \centering
 \includegraphics[scale=0.40]{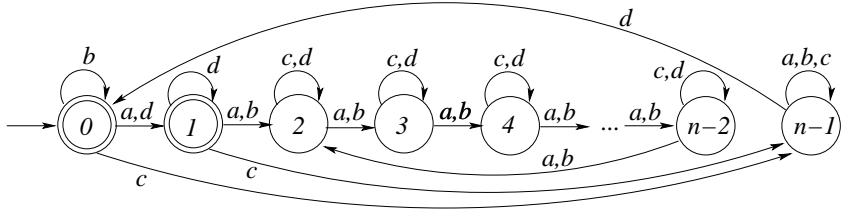}
 \caption{DFA $D$ over $\{a,b,c,d\}$ with many reachable states in 
         DFA $D_3$ for $L^{+c+}$.}
 \label{fig:max_reach}
 \end{figure}

 Construct the NFA $N_1$ for the language $L^+$ from the DFA $D$
 by adding loops on $a$ and $d$ in the initial state $0$.
 In the subset automaton corresponding to the NFA $N_1$,
 every subset of $\{0,1,\ldots,n-2\}$ containing state $0$
 is reachable from the initial state $\{0\}$ on a string over $\{a,b\}$
 since each subset $\{0,i_1,i_2,\ldots, i_k\}$ of size $k$,
 where $1\le k\le n-1$ and $1\le i_1<i_2<\cdots<i_k\le n-2$,
 is reached from the set $\{0,i_2-i_1,\ldots,i_k-i_1\}$ of size $k-1$
 on the string $ab^{i_1-1}$.
 Moreover, after reading every symbol of string $ab^{i_1-1}$,
 the subset automaton is always in a set that contains state $0$.
 All such states are rejecting in the DFA $D_2$ for the language $L^{+c}$,
 and therefore, in the NFA $N_3$ for $L^{+c+}$,
 the initial state $\{0\}$ only goes to the rejecting state
 $\{0,i_1,i_2,\ldots, i_k\}$ on $ab^{i_1-1}$.

 Hence in the DFA $D_3$, for every subset $S$  of $\{0,1,\ldots,n-2\}$
 containing  $0$,
 the initial state $\{\{0\}\}$ goes to the state $\{S\}$
 on a string $w$ over $\{a,b\}$.

 Now notice that transitions on symbols $a$ and $b$
 perform the cyclic permutation of states  in $\{2,3,\ldots,n-2\}$.
 For every state $q$ in $\{2,3,\ldots,n-2\}$ and an integer $i$,
 let 
 $$
    q\ominus i = ((q-i-2) \bmod n-3) + 2
 $$
 denote the state in $\{2,3,\ldots,n-2\}$
 that goes to the state $q$ on string $a^i$,
 and, in fact, on every string over $\{a,b\}$ of length $i$.
 Next, for a subset $S$ of $\{2,3,\ldots,n-2\}$ let
 $$
    S\ominus i=\{q\ominus i\mid q\in S\}.
 $$
 Thus $S\ominus i$ is a shift of $S$,
 and if $q\notin S$, then $q\ominus i\notin S\ominus i$.

 The proof of the lemma now proceeds by induction on $k$.
 To prove the base case, let $S_1$ be a subset of $\{2,3,\ldots,n-2\}$
 and $q_1$ be a state in $\{2,3,\ldots,n-2\}$ with $q_1\notin S_1$.
 In the NFA $N_3$, the initial state $\{0\}$ goes to the state $\{0\}\cup S_1$
 on a string $w$ over $\{a,b\}$.
 Next, state $q_1\ominus|w|$ is  in $\{2,3,\ldots,n-2\}$,
 and it is reached from state $1$ on a string $b^\ell$,
 while state $0$ goes to itself on $b$.
 In the DFA $D_3$ we thus have
 $$
   \big\{ \{0\} \big\}\stackrel{a}{\rightarrow}\big\{ \{0,1\} \big\}
         \stackrel{b^\ell}{\rightarrow}        \big\{ \{0,q_1\ominus|w|\} \big\}
         \stackrel{w}{\rightarrow}             \big\{ \{0,q_1\}\cup S_1 \big\},
 $$
 which proves the base case.

 Now assume that every set of size $k-1$ satisfying the lemma
 is reachable in the DFA $D_3$.
 Let 
 $$
  \mathcal{S}=\Big\{\{0,q_1\}\cup S_1,\{0,q_2\}\cup S_2,\ldots,\{0,q_k\}\cup S_k\Big\} 
 $$
 be a set of size $k$ satisfying the lemma.
 Let $w$ be a string, on which $\big\{ \{0\} \big\}$
 goes to $\big\{ \{0\}\cup S_1 \big\}$,
 and let $\ell$ be an integer such that  $1$ goes to $q_1\ominus|w|$ on $b^\ell$.
 Let
 $$
 \mathcal{S'}=\Big\{\{0,q_2\ominus|w|\ominus\ell\}   \cup 
                        S_2\ominus|w|\ominus\ell,    \ldots,
                    \{0,q_k\ominus|w|\ominus\ell\}    \cup 
                        S_k\ominus|w|\ominus\ell      \Big\},
 $$
 where the operation $\ominus$ is understood to have left-associativity.
 Then $\mathcal{S'}$ is reachable by induction. 
 On $c$,
 every set $\{0,q_i\ominus|w|\ominus\ell\}\cup S_i\ominus|w|\ominus\ell$
 goes  to the accepting state 
 $\{n-1,q_i\ominus|w|\ominus\ell \}\cup S_i\ominus|w|\ominus\ell$
 in the NFA $N_3$, 
 and therefore also to the initial state $\{0\}$.
 Then, on $d$, every state 
  $\{n-1,q_i\ominus|w|\ominus\ell\}\cup S_i\ominus |w|\ominus\ell$
 goes to the rejecting state 
 $\{0,q_i\ominus|w|\ominus\ell \}\cup S_i\ominus|w|\ominus\ell $,
 while $\{0\}$ goes to $\{0,1\}$.
 Hence, in the DFA $D_3$ we have 
 \begin{align*}
   \mathcal{S'}
  &\stackrel{c}{\rightarrow} 
    \Big\{\{0\},
     \{n-1,q_2\ominus|w|\ominus\ell\}\cup S_2\ominus|w|\ominus\ell,  \ldots,
     \{n-1,q_k\ominus|w|\ominus\ell\}\cup S_k\ominus|w|\ominus\ell\Big\} \\
  & \stackrel{d}{\rightarrow}
    \Big\{\{0,1\}, \{0,q_2\ominus|w|\ominus\ell\}\cup S_2\ominus|w|\ominus\ell,
     \ldots, \{0,q_k\ominus|w|\ominus\ell\}\cup S_k\ominus|w|\ominus\ell\Big\} \\
  & \stackrel{b^\ell}{\rightarrow}
   \Big\{\{0,q_1\ominus |w|\}, \{0,q_2\ominus |w|\}\cup S_2\ominus |w|,\ldots,
                                \{0,q_k\ominus |w|\}\cup S_k\ominus |w|\Big\}
     \stackrel{w}{\rightarrow}\mathcal{S}.
 \end{align*}
 It follows that $\mathcal{S}$ is reachable in the DFA $D_3$.
 This  concludes the proof.
\qed
\end{proof}

The next lemma 
shows that some  rejecting states of the DFA $D_3$, 
in which no set is a subset of some other set,
may be pairwise distinguishable.
To prove the result
it uses four symbols,
one of which is the symbol $b$ from the proof of the previuos lemma.

\begin{lemma}\label{-----le4}\label{le:equiv}
 Let $n\ge5$.
 There exists an $n$-state DFA $D=(Q_n,\Sigma,\delta,0,\{0,1\})$
 over a four-letter  alphabet $\Sigma$
 such that all the states of the DFA $D_3$ for the language $L(D)^{+c+}$
 of the form
 $$
    \Big\{ \{0\}\cup T_1,  \{0\}\cup T_2,\ldots,  \{0\}\cup T_k \Big\},
 $$
 in which no set is a subset of some other set and 
 each $T_i\subseteq\{2,3,\ldots,n-2\}$,
 are~pairwise distinguishable.
\end{lemma}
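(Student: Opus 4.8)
The plan is to reduce the pairwise distinguishability of all states of the stated form to the construction, for every $T\subseteq\{2,\dots,n-2\}$, of a single ``separating'' string $w_T$ with the property that in $D_3$ the singleton state $\{\{0\}\cup T'\}$ accepts $w_T$ \emph{if and only if} $T'\subseteq T$ (for all $T'\subseteq\{2,\dots,n-2\}$). Granting this, distinguishability follows by a purely combinatorial argument. Let $\mathcal S\ne\mathcal S'$ be two states of the stated form. Among the finitely many sets $T$ for which $\{0\}\cup T$ lies in exactly one of $\mathcal S,\mathcal S'$, pick one, $T$, that is minimal with respect to inclusion; without loss of generality $\{0\}\cup T\in\mathcal S\setminus\mathcal S'$. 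Then no member $\{0\}\cup T'$ of $\mathcal S'$ satisfies $T'\subseteq T$: if $T'\subsetneq T$ then $\{0\}\cup T'\in\mathcal S$ is impossible because $\mathcal S$ is an antichain containing $\{0\}\cup T$, while $\{0\}\cup T'\notin\mathcal S$ would make $T'$ a strictly smaller set in the symmetric difference, contradicting minimality; and $T'=T$ would give $\{0\}\cup T\in\mathcal S'$, a contradiction. Consequently $w_T$ is accepted from $\{0\}\cup T\in\mathcal S$ and rejected from every member $\{0\}\cup T'$ of $\mathcal S'$, hence accepted from $\mathcal S$ and rejected from $\mathcal S'$, so it distinguishes them. (The monotonicity of Lemma~\ref{-----le1} is not needed here, although it would give an alternative route, reducing ``rejected from $\{0\}\cup T'$ for all $T'\not\subseteq T$'' to the case of singletons $T'=\{q\}$ with $q\notin T$.)

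For the DFA I would take $D=(Q_n,\{a,b,c,d\},\delta,0,\{0,1\})$ where $b$ cyclically rotates the ``middle'' block $\{2,\dots,n-2\}$ (so $\delta(i,b)=i+1$ for $2\le i\le n-3$ and $\delta(n-2,b)=2$) and fixes $0,1,n-1$; $c$ sends $2$ to $1$ and fixes every other state; $d$ sends $0$ to $n-1$ and fixes every other state; and $a$ is an arbitrary filler letter (say the identity) included only to meet the four-letter requirement, playing no role below. The structural facts that make everything work are: state $1$ is a trap for $b$, $c$ and $d$ (so once a reached subset contains $1$ it keeps containing a state of $F$, is never accepting in $D_2$, and never triggers the $\varepsilon$-transitions of $N_3$); state $0$ is fixed by $b$ and $c$ and is removed from a subset only by $d$, which replaces it by the $F$-free state $n-1$; and the middle states are merely permuted by $b$ and are untouched by $c$ and $d$, except that $c$ turns whichever middle state currently equals $2$ into $1$.

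Then, writing $\{2,\dots,n-2\}\setminus T=\{q_1^{*},\dots,q_r^{*}\}$ in a fixed order, I would set $w_T=b^{j_1}c\,b^{j_2}c\cdots b^{j_r}c\,d$, choosing the exponents $j_i$ modulo $n-3$ greedily so that, after the block $b^{j_1}c\cdots b^{j_{i-1}}c$, the rotation $b^{j_i}$ brings the current image of $q_i^{*}$ to the state $2$. Running $D_3$ from $\{\{0\}\cup T'\}$ I would prove by induction along $w_T$ that: (i) throughout the prefix $b^{j_1}c\cdots b^{j_r}c$ the reached state of $D_3$ is a singleton $\{X\}$ with $0\in X$ (so every $X$ meets $F$, no $\varepsilon$-transition of $N_3$ is taken, and no accepting configuration arises early); (ii) the only letter of the prefix that can introduce $1$ into $X$ is a $c$, and the $m$-th $c$ does so precisely when $q_m^{*}\in T'$ (since at that moment $2\in X$ iff the unique middle state whose image is $2$, namely $q_m^{*}$, lies in $T'$); hence after the prefix $1\in X$ iff $T'$ meets $\{q_1^{*},\dots,q_r^{*}\}$, i.e.\ iff $T'\not\subseteq T$. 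The final letter $d$ then sends $X$ to $X.d$: if $1\notin X$ this set avoids $F$, so it is accepting in $D_2$ and $N_3$ additionally spawns $\{0\}$, yielding an accepting state of $D_3$; if $1\in X$ then $1\in X.d$, $N_3$ moves to the non-accepting $\{0\}\cup X.d$ with no spawn, and $D_3$ lands in a non-accepting state. Thus $w_T$ is accepted from $\{\{0\}\cup T'\}$ exactly when $T'\subseteq T$, as required; the degenerate cases $T=\{2,\dots,n-2\}$ (so $w_T=d$) and $T=\emptyset$ are routine special cases.

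The genuinely delicate point — and the one I expect to be the main obstacle — is the interaction with the $\varepsilon$-transitions of $N_3$: one must be certain that running $w_T$ never slips through an accepting configuration before the last letter, which is exactly why $0$ is kept inside the subset throughout the prefix, and that once a witness $q_m^{*}\in T'$ has been detected the run can never recover, which is exactly why $1$ must be a trap under every letter occurring in $w_T$. This last requirement forces the construction to deviate from the symbol $b$ of Lemma~\ref{le:reach} (where $1$ is sent to $2$), and is plausibly also the reason a fourth letter becomes convenient when one later amalgamates this DFA with the one of Lemma~\ref{le:reach} in Corollary~\ref{-----co2}. By contrast, the bookkeeping of the rotation amounts $j_i$ in $\mathbb{Z}_{n-3}$ is entirely mechanical.
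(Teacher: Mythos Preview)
Your argument is correct, and both the combinatorial distinguishing step and the automaton construction go through as you describe. The approach, however, is genuinely different from the paper's.

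For the DFA, the paper keeps the letter $b$ of Lemma~\ref{le:reach} unchanged and adds three letters $e,f,g$: on the middle block $\{2,\dots,n-2\}$ the triple $b,e,f$ generates the \emph{full transformation monoid} (cycle, transposition, contraction), so for each $T$ one picks a string $w_T$ over $\{b,e,f\}$ sending every element of $T$ to $2$ and every element of the complement to $3$, while $0$ stays put. A single final letter $g$ (with $0\mapsto 2$, $3\mapsto 0$, $2\mapsto 2$) then removes $0$ and re-introduces it exactly when $3$ is present; thus $w_T\cdot g$ is accepted from $\{0\}\cup T'$ iff $T'\subseteq T$. Your construction achieves the same separating property by a different mechanism: you never collapse the middle block, but instead rotate each element of $\overline T$ in turn to position $2$ and use $c\colon 2\mapsto 1$ to raise a permanent flag in the trap state $1$, with a final $d\colon 0\mapsto n-1$ to reveal the flag.

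What each buys: the paper's route is shorter to state (one appeal to the transformation monoid plus one test letter) and, crucially for Corollary~\ref{-----co2}, reuses the very same $b$ as Lemma~\ref{le:reach}, so the amalgamated seven-letter alphabet $\{a,b,c,d,e,f,g\}$ arises with an honest overlap. Your route is more elementary---no monoid generation, effectively only three working letters---but, as you note yourself, your $b$ must fix $1$, so it cannot be identified with the $b$ of Lemma~\ref{le:reach}; amalgamation still yields seven letters, just with a different (or no) sharing. Your minimal-element formulation of the distinguishing step is a mild streamlining of the paper's two-case argument; the content is the same.
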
 

\begin{proof}
 To prove the lemma, we reuse the symbol $b$
 from the proof of Lemma~\ref{le:reach},
 and define three new symbols $e,f,g$
 as shown in Fig.~\ref{fig:befg}.

 \begin{figure}[t]\label{-----fig6}
 \centering
 \includegraphics[scale=0.40]{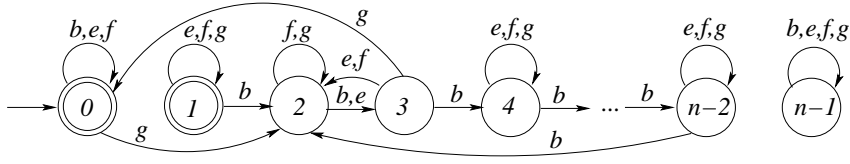}
 \caption{DFA $D$ over $\{b,e,f,g\}$ with many distinguishable states in 
         DFA $D_3$.}
 \label{fig:befg}
 \end{figure}
 
 Notice that on states $2,3,\ldots,n-2$,
 the symbol $b$ performs a big permutation,
 while $e$ performs a trasposition,
 and $f$  a contraction.
 It follows that every transformation of states $2,3,\ldots,n-2$
 can be performed by strings over $\{b,e,f\}$.
 In particular, for each subset $T$ of $\{2,3,\ldots,n-2\}$,
 there is a string $w_T$ over $\{b,e,f\}$
 such that in  $D$,
 each state in $T$ goes to state $2$ on $w_T$,
 while each state in $\{2,3,\ldots,n-2\}\setminus T$ 
 goes to state $3$ on $w_T$.
 Moreover, state $0$ remains in itself while reading the string~$w_T$.
 Next, the symbol $g$ sends state $0$ to state $2$,
 state $3$ to state $0$,
 and state $2$ to itself.

 It follows that in the NFA $N_3$,
 the  state $\{0\}\cup T$,
 as well as each state $\{0\}\cup T'$ with $T'\subseteq T$,
 goes to the accepting state $\{2\}$ on $w_T\cdot g$.
 However,
 every other state $\{0\}\cup T''$ with $T''\subseteq\{2,3,\ldots,n-2\}$
 is  in a state containig $0$,
 thus in a rejecting state of $N_3$,
 while reading $w_T\cdot g$, 
 and it is in the rejecting state $\{0,3\}$
 after reading $w_T$.
 Then $\{0,3\}$ goes to the rejecting state $\{0,2\}$ on reading $g$.
 
 Hence the string $w_T\cdot g$ is accepted by the NFA $N_3$
 from each state $\{0\}\cup T'$ with $T'\subseteq T$,
 but rejected from any other state $\{0\}\cup T''$
 with $T''\subseteq\{2,3,\ldots,n-2\}$.
 
 Now consider  two different states of the DFA $D_3$
 \begin{align*}
    & \mathcal{T}=\big\{  \{0\} \cup T_1, \ldots, \{0\}\cup T_k \big\},\\
    & \mathcal{R}=\big\{  \{0\} \cup R_1, \ldots, \{0\}\cup R_\ell \big\}, 
 \end{align*}
 in which no set is a subset of some other set and where
 each $T_i$ and each $R_j$ is a subset of $\{2,3,\ldots,n-2\}$.
 Then, without loss of generality,
 there is a set $\{0\}\cup T_i$ in $\mathcal{T}$ that is not in $\mathcal{R}$.
 If no set $\{0\}\cup T'$ with $T'\subseteq T_i$ is in $\mathcal{R}$,
 then the string $w_{T_i}\cdot g$
 is accepted from $\mathcal{T}$ but not from $\mathcal{R}$.
 If there is a subset $T'$ of $T_i$
 such that $\{0\}\cup T'$ is  in $\mathcal{R}$,
    then for each suset $T''$ of $T'$
    the set $\{0\}\cup T''$  cannot be in $\mathcal{T}$,
    and then the string  $w_{T'}\cdot g$
 is accepted from $\mathcal{R}$ but not from $\mathcal{T}$.
\qed
\end{proof}

\begin{corollary}[Star-Complement-Star: Lower Bound]\label{-----co2}
 There exists a language $L$ 
 accepted by an $n$-state DFA  over a seven-letter input alphabet,
 such that any DFA for the language $L^{*c*}$ 
 has  $2^{\Omega(n\log n)}$ states.
\end{corollary}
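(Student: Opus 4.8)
The plan is to combine the two constructions from Lemmas~\ref{le:reach} and \ref{le:equiv} on a common state set. Since the letter $b$ is shared between the two (Lemma~\ref{le:equiv} explicitly reuses the $b$ of Lemma~\ref{le:reach}), I would define $D=(Q_n,\{a,b,c,d,e,f,g\},\delta,0,\{0,1\})$ where $\delta$ acts on $a,c,d$ as in the DFA of Lemma~\ref{le:reach}, on $e,f,g$ as in the DFA of Lemma~\ref{le:equiv}, and on $b$ as the common transition. This is an $n$-state DFA over a seven-letter alphabet. The key observation is that, for any sub-alphabet $\Gamma$, the transitions of the derived automata $N_1,D_1,D_2,N_3,D_3$ on letters of $\Gamma$ depend only on $\delta$ restricted to $\Gamma$; hence reachability of a state of $D_3$ along a string over $\{a,b,c,d\}$ is exactly as in Lemma~\ref{le:reach}, and distinguishability of two states of $D_3$ by a string over $\{b,e,f,g\}$ is exactly as in Lemma~\ref{le:equiv}.

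Next I would isolate a large, easily counted sub-family of reachable states that also meets the hypothesis of Lemma~\ref{le:equiv}. Put $m=n-3$ and $p=\lfloor m/2\rfloor$, and split $\{2,\dots,n-2\}$ into $Q=\{q_1,\dots,q_p\}$ and $R$ with $|R|=\lceil m/2\rceil$. For every chain $S_1\subseteq S_2\subseteq\cdots\subseteq S_p\subseteq R$ consider the state
$$
  \mathcal{S}(S_1,\dots,S_p)=\big\{\,\{0,q_1\}\cup S_1,\ \dots,\ \{0,q_p\}\cup S_p\,\big\}
$$
of $D_3$. By Lemma~\ref{le:reach} (with $k=p\le n-2$, the $q_i$ distinct and outside $S_p$ because $Q\cap R=\emptyset$) every such state is reachable. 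Writing $\{0,q_i\}\cup S_i=\{0\}\cup T_i$ with $T_i=\{q_i\}\cup S_i\subseteq\{2,\dots,n-2\}$, the $T_i$ form an antichain: if $T_i\subseteq T_j$ with $i\ne j$ then $q_i\in\{q_j\}\cup S_j$, which is impossible since $q_i\ne q_j$ and $q_i\notin R\supseteq S_j$. Hence Lemma~\ref{le:equiv} applies and all these states are pairwise distinguishable.

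Finally I would count. Distinct chains give distinct states, since from $\mathcal{S}(S_1,\dots,S_p)$ one recovers each $S_i$ by deleting $0$ and $q_i$ from the unique member of the state containing $q_i$. The number of chains $S_1\subseteq\cdots\subseteq S_p\subseteq R$ is $(p+1)^{|R|}$, because each element of $R$ independently first appears at one of $p+1$ positions $j\in\{1,\dots,p,p{+}1\}$ (with $p{+}1$ meaning ``never''). Therefore the minimal DFA $D_3^{\mathrm{min}}$ for $L(D)^{+c+}$ has at least $\big(\lfloor(n-3)/2\rfloor+1\big)^{\lceil(n-3)/2\rceil}\ge\big((n-2)/2\big)^{(n-3)/2}=2^{\Omega(n\log n)}$ states; combined with the remark in the introduction that the state complexities of $L^{+c+}$ and $L^{*c*}$ differ by at most one, this yields the claimed lower bound for $L^{*c*}$.

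The main obstacle is making the combined construction interact cleanly with both lemmas at once — checking that sharing $b$ is consistent and that the ``restriction to a sub-alphabet'' argument genuinely transfers both reachability (over $\{a,b,c,d\}$) and distinguishability (over $\{b,e,f,g\}$) — and, quantitatively, choosing the sub-family so that it (a) has the antichain shape required by Lemma~\ref{le:equiv}, (b) has distinct parameter tuples map to distinct states, which is why the $q_i$ and the $S_i$ are forced into disjoint blocks $Q$ and $R$, and (c) still has $2^{\Omega(n\log n)}$ members, which forces $k=p$ to grow linearly in $n$.
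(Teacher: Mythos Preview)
Your proposal is correct and follows essentially the same approach as the paper: combine the two DFAs on the shared letter $b$, split $\{2,\dots,n-2\}$ into a block of distinguished elements $q_i$ and a disjoint block carrying the chain $S_1\subseteq\cdots\subseteq S_k$, invoke Lemma~\ref{le:reach} for reachability and Lemma~\ref{le:equiv} for distinguishability, and count the chains to get $2^{\Omega(n\log n)}$. In fact you are more careful than the paper on several points it leaves implicit---the sub-alphabet restriction argument, the verification that the resulting family is an antichain so that Lemma~\ref{le:equiv} actually applies, and the injectivity of the map from chains to states.
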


\begin{proof}
 Let 
 $\Sigma=\{a,b,c,d,e,f,g\}$ and
 $L$ be the language 
 accepted by $n$-state DFA $D=(\{0,1,\ldots,n-1\},\Sigma,\delta,0,\{0,1\})$,
 where transitions on symbols $a,b,c,d$ 
 are defined as in the proof of Lemma~\ref{le:reach},
 and on symbols $d,e,f$ as in the proof of Lemma~\ref{le:equiv}.

 Let $m=\lceil n/2  \rceil$.
 By Lemma~\ref{le:reach},
 the following states
 are reachable in the DFA $D_3$    
 for $L^{+c+}$:
 $$
     \{ \{0,2\}\cup S_1, \{0,3\}\cup S_2,\ldots, 
                   \{0,m-2\}\cup S_{m-1}\},
 $$
 where 
 $S_1\subseteq S_2 \subseteq\cdots \subseteq S_{m-1}\subseteq\{m-1,m,\ldots,n-2\}$.
 The number of such subsets $S_i$ is given by $m^{n-m}$, and we have
 $$
  m^{n-m}\ge \Big( \frac{n}{2}\Big) ^{\frac{n}{2}-1}=2^{\Omega(n\log n)}.
 $$
 By Lemma~\ref{le:equiv}, all these states are pairwise distinguishable,
 and the lower bound follows. 
\qed
\end{proof}

Hence we have an asymptotically tight bound on the state complexity of
star-complement-star operation that is significantly smaller than
$2^{2^n}$.

\begin{theorem}\label{-----thm1}
 The state complexity of star-complement-star is $2^{\Theta(n \log n)}$.
\qed
\end{theorem}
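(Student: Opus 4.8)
The plan is to combine the two corollaries already established. Recall that by the discussion in the introduction, the state complexities of $L^{+c+}$ and $L^{*c*}$ differ by at most $1$, so any bound (upper or lower) that holds asymptotically for one holds for the other; in particular $2^{O(n\log n)}$ and $2^{\Omega(n\log n)}$ bounds transfer directly between the two operations. Hence it suffices to pin down the state complexity of star-complement-star by invoking the upper bound from Corollary~\ref{-----co1} and the lower bound from Corollary~\ref{-----co2}.

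First I would record the upper direction: Corollary~\ref{-----co1} shows that if $L$ is accepted by an $n$-state DFA, then $L^{*c*}$ is accepted by a DFA with $2^{O(n\log n)}$ states, so the state complexity of the operation is at most $2^{O(n\log n)}$. Next I would record the lower direction: Corollary~\ref{-----co2} exhibits, for each $n$, a language $L$ over a fixed seven-letter alphabet accepted by an $n$-state DFA for which every DFA for $L^{*c*}$ requires $2^{\Omega(n\log n)}$ states, so the state complexity of the operation is at least $2^{\Omega(n\log n)}$. Putting the two together, the state complexity of star-complement-star is simultaneously $O$ and $\Omega$ of $2^{cn\log n}$ for appropriate constants, i.e. $2^{\Theta(n\log n)}$.

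There is essentially no remaining obstacle: the only thing to be slightly careful about is that the hidden constants in the exponents of the upper and lower bounds need not match (the upper bound's constant is governed by $n!(n+2)^n$, while the lower bound's comes from $(n/2)^{n/2-1}$), but this is exactly why the statement is phrased with $\Theta(n\log n)$ \emph{inside} the exponent rather than as a sharp multiplicative constant. So the final write-up is just a one-line deduction citing Corollary~\ref{-----co1} and Corollary~\ref{-----co2}, with a parenthetical reminder that the $\pm 1$ gap between $L^{+c+}$ and $L^{*c*}$ is irrelevant at this level of precision.
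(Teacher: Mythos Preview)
Your proposal is correct and matches the paper's approach exactly: the paper gives no explicit proof for this theorem beyond the \qed, treating it as an immediate consequence of Corollary~\ref{-----co1} (upper bound) and Corollary~\ref{-----co2} (lower bound). Your additional remarks about the $\pm 1$ gap between $L^{+c+}$ and $L^{*c*}$ and the non-matching constants in the exponent are accurate and simply make explicit what the paper leaves implicit.
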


\section{Applications}
\label{***applications}

We conclude with an application.

\begin{corollary}
Let $L$ be a regular language, accepted by a DFA with $n$ states.  Then
any language that can be expressed in terms of $L$ and the operations
of positive closure, Kleene closure, and complement has state complexity
bounded by $2^{\Theta(n \log n)}$.
\end{corollary}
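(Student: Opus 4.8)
The statement has two sides. The lower bound is free: by Corollary~\ref{-----co2} there is an $n$-state DFA whose language $L$ has $\mathrm{sc}(L^{*c*})=2^{\Omega(n\log n)}$, and $L^{*c*}$ is one of the expressions considered. So the content lies in the upper bound, and my plan is to show that only a bounded number of distinct languages can arise at all, each with $2^{O(n\log n)}$ states.

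\emph{Steps 1--2: cut the family down to a constant.} Complement is an involution and preserves state complexity; $L^+$ and $L^*$ differ only in whether $\epsilon$ is a member, and inserting or deleting $\epsilon$ changes state complexity by at most one. Using $(M^+)^+=M^+$, $(M^*)^*=M^*$, $(M^+)^*=M^*$ and $(M^*)^+=M^*$, every maximal block of consecutive closure operations collapses to a single one, which --- up to the presence of $\epsilon$ --- may be taken to be $+$; so, up to complementation and $\pm\epsilon$, every expression evaluates to $L$, to $\overline{L}$, or to some $\mathcal L_j:=L^{+(c+)^j}$ or $(\overline{L})^{+(c+)^j}$ with $j\ge 0$. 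Writing $\mathcal L_{j+1}=(\overline{\mathcal L_j})^{+}$ and using that every $\mathcal L_j$ is $+$-closed (being in the image of $+$), extensivity and monotonicity of $+$ give $\overline{\mathcal L_1}\subseteq\mathcal L_2\subseteq\mathcal L_0$; complementing this chain and applying $+$ yields $\mathcal L_1\subseteq\mathcal L_3\subseteq\mathcal L_1$, so $\mathcal L_3=\mathcal L_1$ and the sequence is eventually $2$-periodic. Hence, up to complementation and $\pm\epsilon$, only $L$, $\overline{L}$, $\mathcal L_0$, $\mathcal L_1$, $\mathcal L_2$ and the three analogues for $\overline{L}$ occur.

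\emph{Step 3: bound each.} Here $\mathrm{sc}(L)=\mathrm{sc}(\overline{L})=n$; $\mathrm{sc}(\mathcal L_0)=\mathrm{sc}(L^+)$ is at most $2^n$ by the subset construction, and likewise $\mathrm{sc}((\overline{L})^{+})\le 2^n$; and $\mathrm{sc}(\mathcal L_1)=\mathrm{sc}(L^{+c+})$ is $2^{O(n\log n)}$ directly by Corollary~\ref{-----co1}, as is $\mathrm{sc}((\overline{L})^{+c+})$ by that corollary applied to the $n$-state language $\overline{L}$. The one remaining case is $\mathcal L_2=L^{+c+c+}$ and its $\overline{L}$-analogue: for this I would invoke the language identity $L^{+c+c+}=L^{c+c+}$, which rewrites $\mathcal L_2$ as $(\overline{L})^{+c+}$ and so, once more by Corollary~\ref{-----co1} on $\overline{L}$, gives $\mathrm{sc}(\mathcal L_2)=2^{O(n\log n)}$. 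Complementation and $\pm\epsilon$ cost nothing, so the corollary follows.

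\emph{The main obstacle.} Everything then hinges on the identity $L^{+c+c+}=L^{c+c+}$. The inclusion $L^{+c+c+}\supseteq L^{c+c+}$ is immediate from $L\subseteq L^+$ and monotonicity; the content is the reverse inclusion, which says that $L^{c+c+}$ is unchanged when $L$ is replaced by $L^+$ --- equivalently, writing $\tau(M)=(\overline{M})^{+}$, that $\tau^2(L)=\tau^2(L^+)$, although $\tau(L)$ and $\tau(L^+)$ differ in general. I expect this to be the hard part: the constructions of Section~\ref{***upper} applied naively bound $\mathrm{sc}(L^{+c+c+})$ only by a tower of exponentials and give no handle on the identity, so one needs a combinatorial argument on words --- characterising which strings survive the alternation of $+$ and complement --- of the type developed around Kuratowski's closure--complement problem for Kleene closure. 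If the identity resists, one can instead try to bound $\mathrm{sc}(L^{+c+c+})$ by lifting Lemmas~\ref{-----le1} and~\ref{le:express} one level up, applying the $+$-construction to the minimal DFA for $\overline{L^{+c+}}$, whose states are the special antichains of Lemma~\ref{le:express}; but that route has its own obstacle, since the clean domination argument of Lemma~\ref{-----le1} relies on the simple form of the first-level transitions ($X\mapsto X.a$ or $X\mapsto\{0\}\cup X.a$), which is lost after one more iteration.
\qed
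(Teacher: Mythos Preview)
Your strategy is the same as the paper's: reduce to a finite list of expressions and bound each one via Corollary~\ref{-----co1}. The paper, however, does not carry out Steps~1--2 at all; it simply invokes \cite{BGS}, which already proves that every language obtainable from $L$ using $+$, $*$, and complement equals, up to inclusion of $\epsilon$, one of
\[
L,\quad L^{+},\quad L^{c+},\quad L^{+c+},\quad L^{c+c+}
\]
or their complements. From that list the bound is immediate: $L$ and $L^c$ have $n$ states, $L^{+}$ and $L^{c+}=(\overline{L})^{+}$ have at most $2^n$, and $L^{+c+}$ together with $L^{c+c+}=(\overline{L})^{+c+}$ are covered by Theorem~\ref{-----thm1} applied to $L$ and to $\overline{L}$. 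Your $\mathcal L_2=L^{+c+c+}$ never appears because the cited reduction has already collapsed it.

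The obstacle you single out is exactly right. Your argument $\mathcal L_3=\mathcal L_1$ is the standard Kuratowski identity $kckckck=kck$, valid for any monotone, extensive, idempotent operator; by itself it bounds the list at fourteen, not ten. The drop to ten comes from one extra identity that genuinely uses the concatenative nature of $+$, namely $L^{+c+c+}=L^{c+c+}$ (equivalently, as operators, $+c\,+c\,+=c\,+c\,+$); once this holds, the four ``long'' Kuratowski words $+c+c+$, $c+c+c+$, $+c+c+c$, $c+c+c+c$ all fall back into the list of ten. This is precisely the identity you conjecture, and it is proved in \cite{BGS}. So your plan is sound and your self-diagnosis is accurate; to complete the argument without the citation you would need to supply that one combinatorial lemma, after which your Step~3 goes through verbatim. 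The alternative route you sketch---iterating Lemmas~\ref{-----le1} and~\ref{le:express} one level up---is unnecessary and, as you suspect, would not obviously work.
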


\begin{proof}
As shown in \cite{BGS}, every such language
can be expressed, up to inclusion of~$\epsilon$,
as one of the following $5$ languages and their complements:
$$L, L^{+}, L^{c+}, L^{+c+}, L^{c+c+}.$$
If the state complexity of $L$ is $n$, then
clearly the state complexity of $L^c$ is also~$n$.  Furthermore,
we know that the state complexity of $L^+$ is bounded by
$2^n$ (a more exact bound can be found in \cite{yzs94}); this
also handles $L^{c+}$.
The remaining languages can be handled with Theorem~\ref{-----thm1}.
\qed
\end{proof}

\end{document}